\newtheorem{lemma}{Lemma}
\begin{document}
%
\title{On Quantizer Design to Exploit Common Information in Layered Coding of Vector Sources}
%
%
%

\author{Mehdi~Salehifar,~\IEEEmembership{Member,~IEEE,}
        Tejaswi~Nanjundaswamy,~\IEEEmembership{Member,~IEEE,}
        and~Kenneth~Rose,~\IEEEmembership{Fellow,~IEEE}
}

%
%

\markboth{}%
{}
%



\maketitle

\begin{abstract}
This paper  studies a layered coding framework with a relaxed hierarchical structure. Advances in wired/wireless communication and consumer electronic devices have created a requirement for serving the same content at different quality levels. The key challenge is to optimally encode all the required quality levels with efficient usage of storage and networking resources. The approach  to store and transmit independent copies for every required quality level is highly wasteful in resources. Alternatively, conventional scalable coding has inherent loss due to its structure. This paper studies a layered coding framework with a relaxed hierarchical structure to transmit information common to different quality levels along with individual bit streams for each quality level. The flexibility of sharing only a properly selected subset of information from a lower quality level with the higher quality level, enables achieving operating points between conventional scalable coding and independent coding, to control the layered coding penalty. Jointly designing common and individual layers' coders overcomes the limitations of conventional scalable coding and non-scalable coding, by providing the flexibility of transmitting common and individual bit-streams for different quality levels. It extracts the common information between different quality levels with negligible performance penalty. Simulation results for  practically important sources, confirm the superiority of the work.

\end{abstract}

\begin{IEEEkeywords}
Common Information, Source Coding, Scalable Coding, Vector Quantization
\end{IEEEkeywords}

%
\IEEEpeerreviewmaketitle

\section{Introduction}
\IEEEPARstart{T}{echnological}  advances ranging from multigigabit high-speed Internet to wireless communication and mobile, limited resource receivers, have created an extremely heterogeneous network scenario with data consumption devices of highly diverse decoding and display capabilities, all accessing the same content over networks of time varying bandwidth and latency.
The primary challenge is to maintain optimal signal quality for a wide variety of users, while ensuring efficient use of resources for storage and transmission across the network. 

The simplest solution to address this challenge is storing and transmitting independent copies of the signal for every type of user the provider serves. This solution is highly wasteful in resources and results in extremely poor scalability. 

In an alternative solution, conventional scalable coding \cite{svc,svc2} generates layered bit-streams, wherein a base layer provides a coarse quality reconstruction and successive layers refine the quality, incrementally. Depending on the network, channel and user constraints, a suitable number of layers is transmitted and decoded, yielding a prescribed quality level. 
However, it is widely recognized that there is an inherent loss due to the scalable coding structure, with significantly worse distortion compared to independent (non-scalable) encoding at given receive rates \cite{Fail_SR,Fail_SR2,CELQ},
as most sources are not successively refinable at finite delays for the distortion measure employed and the combination of rates at each layer.
Moreover for fixed receive rates, non-scalable coding and conventional scalable coding have the highest and the lowest total transmit rate, respectively.
Thus, non-scalable coding and conventional scalable coding represent two extreme points in the trade off between total transmit rate and distortions at the decoders, with fixed receive rates.

In our previous work we proposed a novel layered coding paradigm for multiple quality levels \cite{CI_NSR} inspired by the information theoretic concept of common information of dependent random variables \cite{G-W,wyner,GW_Common}, wherein only a (properly selected) subset of the information at a lower quality level is shared with the higher quality level. This flexibility enables efficiently extracting common information between quality levels and achieve intermediate operating points in the trade off between total transmit rate and distortions at the decoders, in effect controlling the layered coding penalty. Our early results \cite{CI_NSR} established the information theoretic foundations for this framework and a later paper \cite{layered} employed this framework within a standard audio coder to demonstrate its potential. In this paper we tackle the important problem of designing quantizers for this layered coding framework.

 First we design quantizers for two quality levels with fixed receive rates. We need to design three quantizers, one for the common layer, whose output is sent to both the decoders, and two other quantizers refining the common layer information at two quality levels, whose output is sent individually to the two decoders. We first propose a technique to jointly design quantizers across layers of this framework, while focusing on the setting of two quality levels.
 We propose an iterative approach for designing the three quantizers, wherein at each iteration one quantizer is updated to minimize the overall cost function while the others are fixed, and the iterations are repeated until convergence. Given a common layer quantizer, for each of the individual layer, refining quantizers for all the intervals of the common layer are jointly designed to minimize the overall cost function. We also develop (``Lloyd algorithm style'') optimal update rules for the common layer quantizer that minimizes the overall cost while accounting for the current individual layer quantizers.
 
  Then we propose a low complexity approach for deriving the quantizers.  First we employ an optimal quantizer for a given rate at the common layer. Given this quantizer, we design two other optimal quantizers at two different required rates, conditioned on each common layer interval. Finally the optimal common layer rate is estimated numerically by trying multiple allowed common rates and selecting the highest one amongst those with negligible loss in distortion compared to non-scalable coding. We then adapt this technique to the practically important Laplacian sources.
 
 In our previous works \cite{icasspcommon,phd,dcccom} we explained the low complexity design of quantizers as well as briefly explained the joint design of quantizers for two quality levels.

Finally we propose an iterative technique for joint design of vector quantizers for all layers of this framework. For simplicity, we first explain the approach for the setting of two quality levels. Then we explain  the extended approach  to other relaxed hierarchical structures. We  develop a cost function which explicitly controls the tradeoff between distortions, receive rates, and total transmit rate.
We then propose an iterative approach for jointly designing vector quantizers for all the layers, wherein we estimate optimal quantizer partitions at all the layers, given reconstruction codebooks, and optimal reconstruction codebooks for all quality levels, given quantizer partitions, iteratively, until convergence. 

Experimental evaluation results for multivariate normal distribution and Laplacian distribution, substantiate the usefulness of the proposed technique.

The rest of this paper is organized as follows. In Section II, we present preliminaries. In Sections III and IV we describe the proposed methods. Experimental results are presented in Section V and concluding remarks are in Section VI.

\begin{section}{Preliminaries }
	
Few mathematical results have had as much impact on the foundation of the information age as Shannon's 1948 point-to-point communication theorems \cite{pro6}. However, the communication model assumed in these seminal contributions is inadequate for the realities of modern networks. Extensions of the theory to multi-terminal settings have proven difficult and, despite several spectacular advances, many questions remain only partially answered, and many more questions regarding the conversion of available insights into practical approaches, remain unanswered.
		\begin{subsection}{Successive Refinement and Scalable Coding}
			Rate-distortion theory is a major branch of information theory, which provides the theoretical foundations for lossy data compression. 
			
%
%
%
%
%
%
			 
			  The fundamental theorem of rate-distortion theory \cite{pro6} is that the minimum rate of communication required to convey the sequence of iid random variables, so that the receiver can reconstruct the sequence at average distortion of at most $D$, is given by:
			  
			  \begin{align}
			  R(D)= \min_{p(\hat{x}|x): E\{d(x,\hat{x})\}\leq D}I(X;\hat{X})
			  \end{align} 
			  
			  The result states that the rate-distortion function, indicating the minimum achievable rate for prescribed distortion, is given by minimizing the mutual information over all conditional distributions  subject to the distortion constraint. It led to an abundance of research work devoted to finding
			  rate-distortion bounds for various new settings \cite{pro8,G-W,pro10,pro11,pro12}, numerical evaluation of the rate  distortion function for generic sources and distortion measures \cite{pro13,pro14,pro15,pro16} and on practical scalar/vector quantizer design and analysis \cite{pro17,pro18,pro19}.
			  
			  From the view point of rate-distortion theory, scalable coding has been addressed in the context of successive refinement of information \cite{pro21,pro22,pro23,pro24,pro25,pro26,pro27}. 
			  The problem is motivated by scalable coding, where the encoder generates two layers of information, namely, the base layer at rate $R_{12}$, and the enhancement layer at rate $R_2$ (where a rate subscript specifies to which decoders it is routed). The base layer provides a coarse reconstruction of the source (at rate $R_{12}$), while the enhancement layer is used to `refine' the reconstruction beyond the base layer (at an overall rate of $R_2 + R_{12}$). The base and enhancement layer distortions are $D_1$ and $D_2$ respectively, where $D_2 < D_1$.
			  
			  Equitz and Cover \cite{pro21} established the conditions  for a source-distortion pair to achieve rate-distortion optimality at both the layers simultaneously. Such source-distortion pairs are called successively refinable in the literature.
			  
			  This optimality can be achieved for a given distortion measure at distortions $D_1$
			  and $D_2$ if and only if there exists a conditional probability distribution $p(\hat{x}_1,\hat{x}_2|x)$ such that: 
			  \[
			  E\{d(X,\hat{X}_1)\}\leq D_1 \qquad  E\{d(X,\hat{X}_2)\}\leq D_2
			  \]
			  \begin{align}
			  I(X;\hat{X}_1)=R(D_1) \qquad  I(X;\hat{X}_2)=R(D_2)
			  \end{align}
			  \[
			  X \leftrightarrow \hat{X}_2 \leftrightarrow \hat{X}_1
			  \]
			  
			  However, when a source-distortion pair is not successively refinable, it is impossible to maintain  optimality at both the layers simultaneously in the scalable coding framework.
			
		\end{subsection}

\subsection{Gray-Wyner Network}
		\label{subsec:GW}
		The GW network, consists of an encoder that transmits two correlated sources to two receivers using three channels: a common channel linking the encoder to both receivers, and two private channels linking it to individual receivers. The channels are assumed to be noiseless and each channel has a specified per bit communication cost (Fig.~\ref{fig:Gray-Wyner}).  
		
		\begin{figure}[t]
		\centering
		\centerline{\includegraphics[width=1\linewidth]{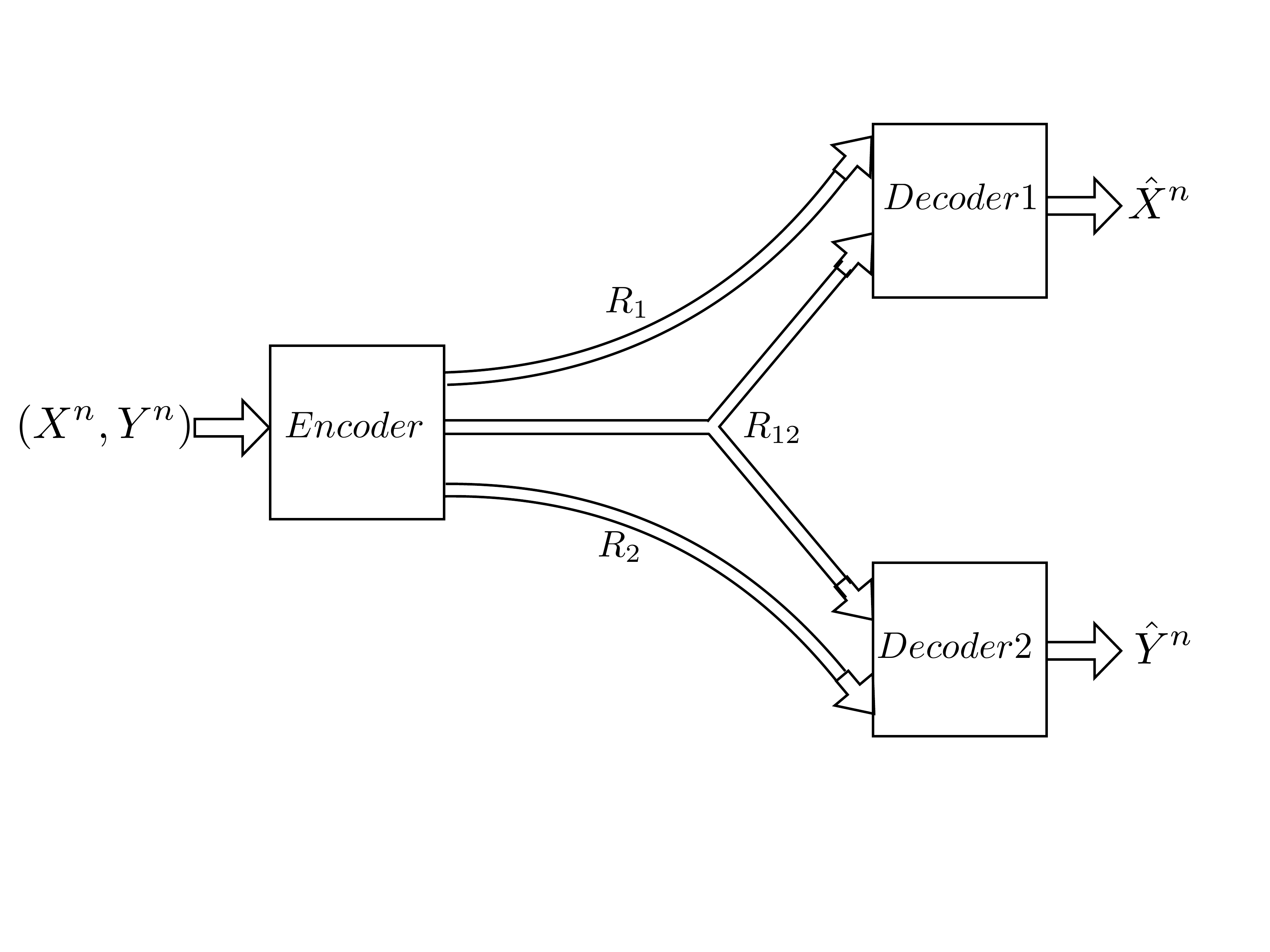}}
			\vspace{-1cm}
		\caption{Gray-Wyner network }
		\label{fig:Gray-Wyner}
	\end{figure}

		The objective is to minimize the communication cost while achieving decoder reconstructions below the allowed distortion. This problem clearly involves rate trade offs as we seek the optimal rate triple ($R_{12},R_1,R_2$). GW derived the asymptotic minimum cost achievable for this network. 

		{\bf{Definitions of Common Information based on the GW Network}} \cite{wyner}: 
		
		A different concept of common information (CI) of two dependent random variables was proposed by G$\mathrm{\acute{a}}$cs and K$\mathrm{\ddot{o}}$rner \cite{pro_GW8}. Ahlswede and K$\mathrm{\ddot{o}}$rner gave an alternative characterization,  
		 which can be characterized over the GW network as the maximum over all shared rates when the two receive rates are set to their respective minimal, i.e., $R_{12} + R_1 = H(X)$ and $R_{12} + R_2 = H(Y)$ (refer to \cite{pro_GW9}). The lossy generalization of  G$\mathrm{\acute{a}}$cs-K$\mathrm{\ddot{o}}$rner CI at distortions $D_1$
 and $D_2$ is defined as $sup~ R_{12}$ subject to 	$R_{12} + R_1 = R(D_1)$ and $R_{12} + R_2 = R(D_2)$.	
 
\subsection{Laplacian Sources}

In many practical applications, multimedia sources are modeled by the Laplacian distribution,
\[
f_L(x)=\frac{\lambda}{2} e^{-\lambda  |x|},
\] where $\lambda$ is Laplacian parameter.

Hence considerable attention has been focused on its optimal quantization, which is discussed in the following subsections. 
\subsubsection{Efficient Scalar Quantization of Laplacian Sources}
In \cite{lap}, the optimal entropy constrained quantizer for the Laplacian source was derived to be the dead-zone plus uniform threshold quantization classification rule and the nearly uniform reconstruction rule (as illustrated in Fig.~\ref{fig:lap_sca}). This dead-zone quantizer (DZQ) has uniform step size in all of the intervals, except the dead-zone interval around zero, which is wider than the other intervals.
\begin{figure}[t]
	\centering
	\centerline{\includegraphics[width=1.1\linewidth]{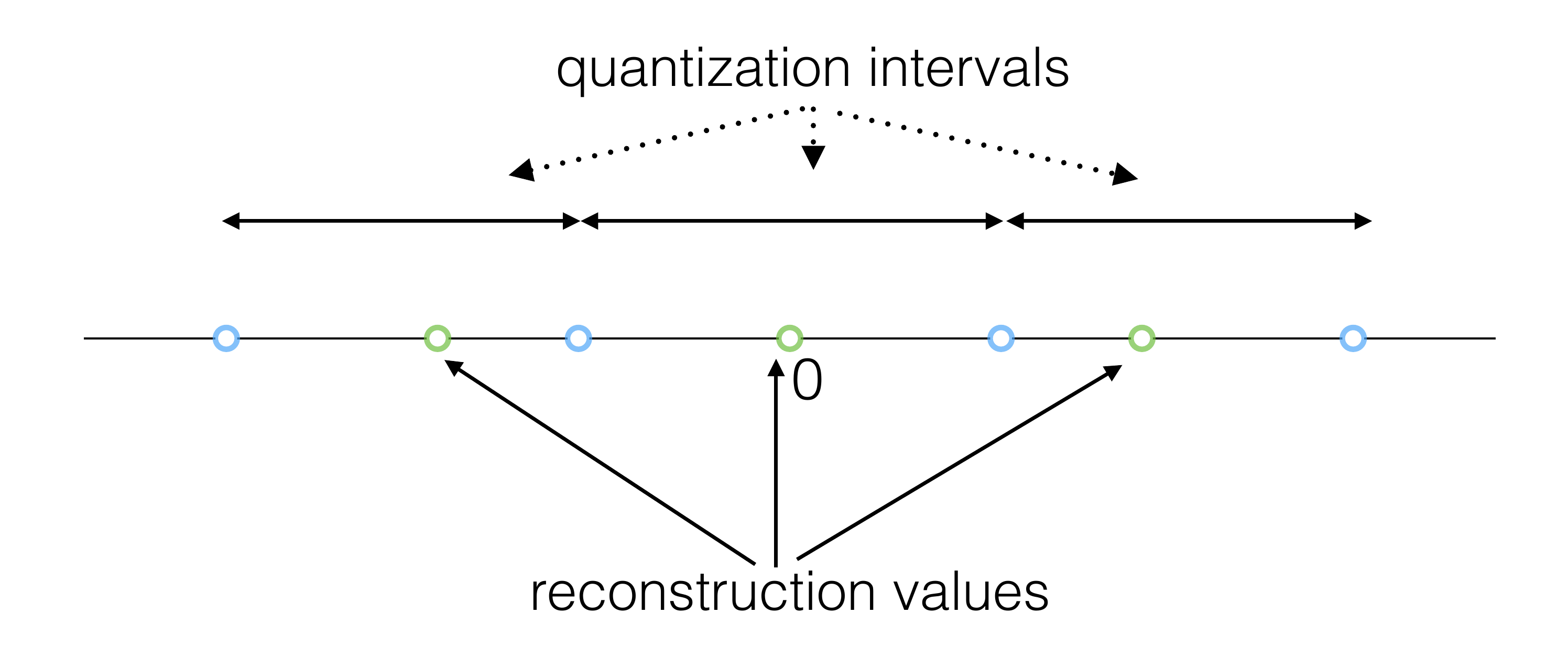}}
	\caption{ The optimal scalar dead-zone quantizer for Laplacian sources with nearly uniform reconstruction rule.}
	\label{fig:lap_sca}
\end{figure}

\subsubsection{Scalable Coding of Laplacian Sources:\\}

\textbf{In Current Multimedia Standards}

In current scalable coding standards such as, scalable HEVC \cite{HEVC} for video, and scalable AAC \cite{ISOaac} for audio, the base layer employs DZQ for quantizing the source. Then, in the enhancement layer, a scaled version of the base layer DZQ quantizes the base layer reconstruction error. 

\textbf{Conditional Enhancement Layer Quantization (CELQ)}

In \cite{CELQ}, an efficient approach for scalable coding of Laplacian sources is proposed, wherein:
\begin{itemize}
	\item The base layer employs a DZQ.
	\item The enhancement layer quantizers are conditioned on the base layer quantization interval: Use DZQ if a dead zone interval was established by the base layer, and use a uniform quantizer otherwise (as illustrated in Fig.~\ref{fig:CELQ}).
\end{itemize}
\begin{figure}[t]
	\centering
	\centerline{\includegraphics[width=1.2\linewidth]{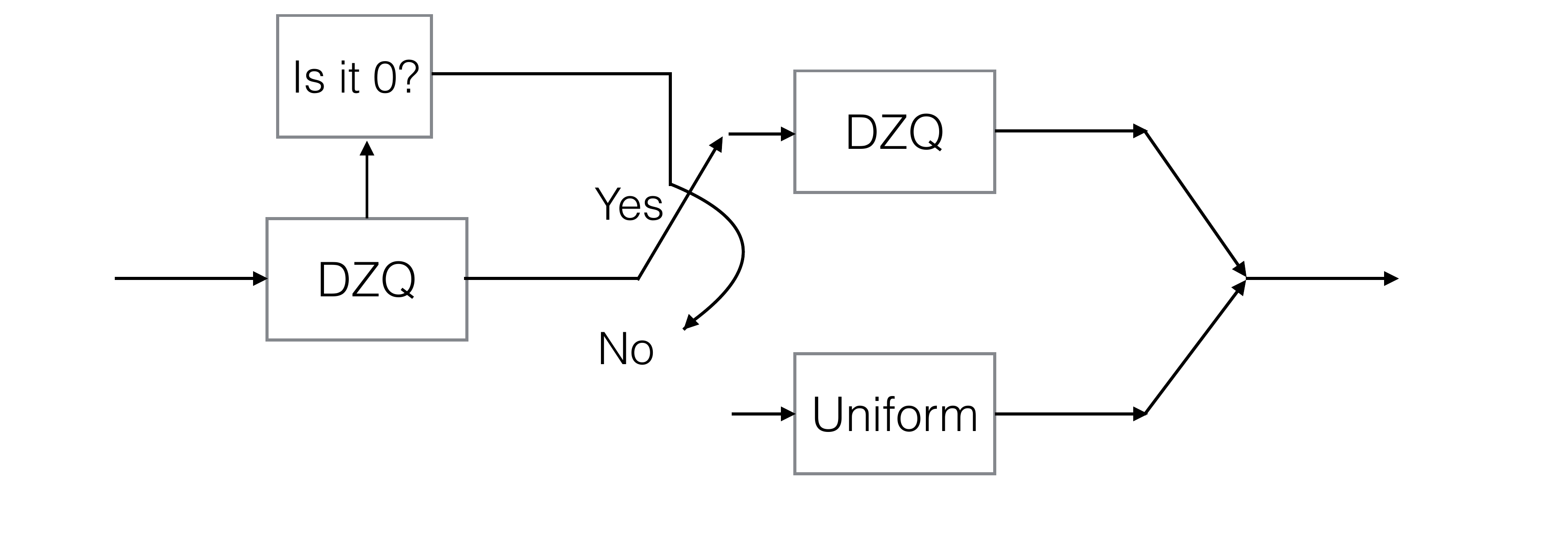}}
	\caption{Conditional enhancement layer quantizer for Laplacian sources. Based on the base layer DZQ interval, the enhancement layer quantizer is chosen.}
	
	\label{fig:CELQ}
\end{figure}
This improved scalable coder still suffers from performance penalty compared to non-scalable coding.

\end{section}
\section{Overview of the Proposed Framework}


	The proposed novel layered coding framework arises out of our lab's lossy generalization of GK common information as a special case of $X=Y$ in the GW network. In \cite{CI_NSR} we derived the information theoretic characterization of the CI between different quality levels of the same source as $sup ~ I(X;U)$ where the supremum is over all conditional densities $P(U,\hat{X}_1,\hat{X}_2|X)$ such that we have RD optimality at $D_1$ and $D_2$ and for which following Markov chains hold:
 \[
 X \leftrightarrow \hat{X}_1 \leftrightarrow U,~~~ X \leftrightarrow \hat{X}_2 \leftrightarrow U
  \]

	In this section we illustrate both the novel layered coding paradigm \cite{CI_NSR} and the significance of effective extraction of common information, via quantizer design for a toy example involving a simple  uniform distribution.
	For a uniformly distributed random variable, as shown in \cite{uniform}, the optimal entropy constrained scalar quantizer (ECSQ), at rate $\log(N)$, where $N$ is an integer, is a uniform quantizer with $N$ levels. Fig.~\ref{fig:NS} shows the optimal partitions for a uniform random variable, $U(0,6)$, at rates $R_1=2$ and $R_2=\log(6)$, resulting in distortion $D_1$ and $D_2$, respectively. 
	Clearly, boundary points of quantizer 1, do not completely align with partitions of quantizer 2, i.e., scalable coding with these receive rates is not successively refinable, and such a hierarchical coding results in an enhancement layer with distortion worse than independent quantization at rate $\log(6)$. This implies the information required to achieve $D_1$ is not a proper subset of the information required to achieve $D_2$. 
	\begin{figure}[t]	
		\begin{center}
			
			\includegraphics[width=0.5\textwidth]{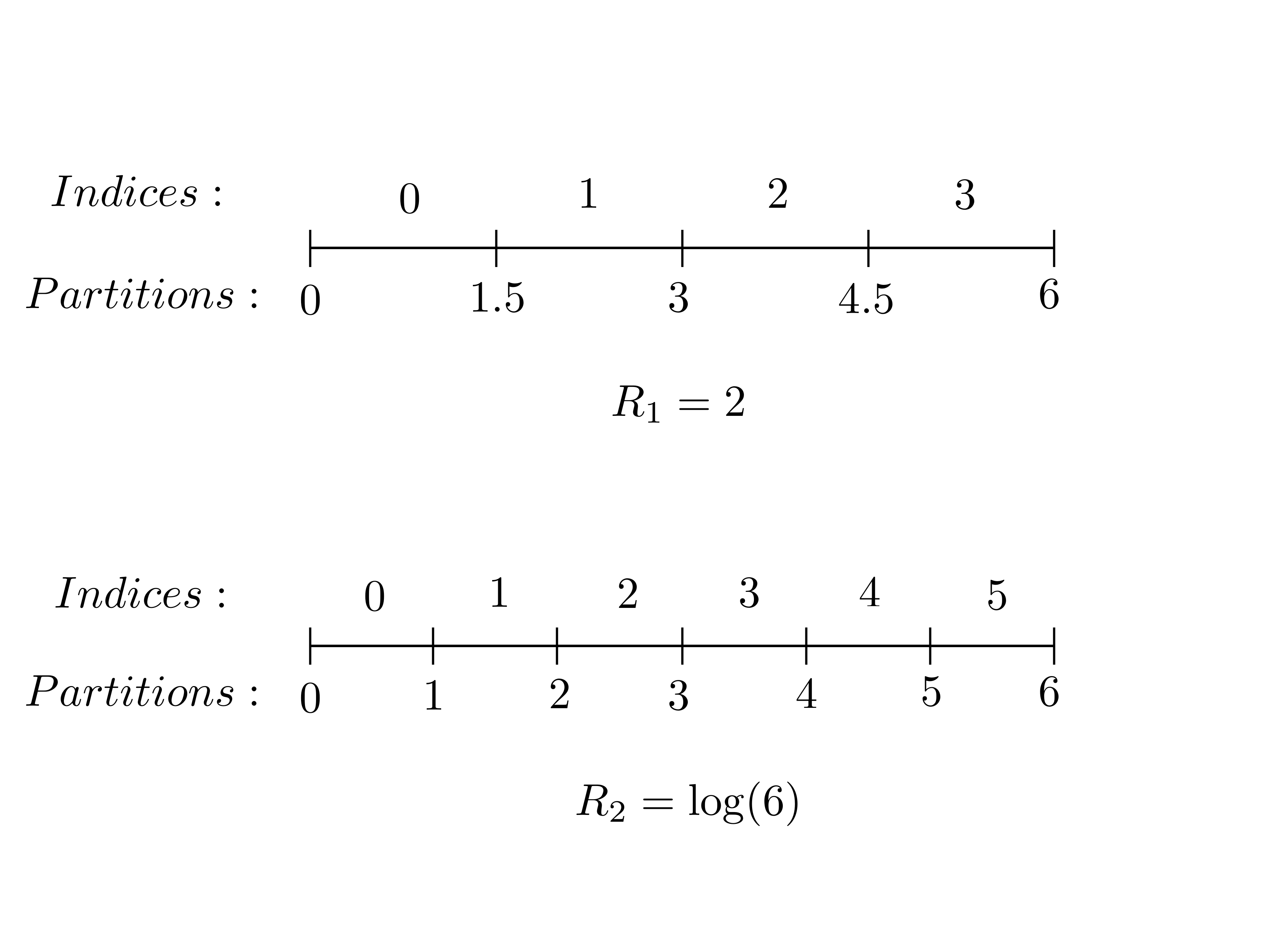}
		\end{center}
		\centering
		\vspace{-1cm}
		
		\caption{Partitions for individually quantizing a uniformly distributed random variable, $U(0,6)$,  at rates $R_1=2$ and $R_2=\log(6)$. }
		\label{fig:NS}
	\end{figure}
	\begin{figure}[t]
		\centering
		\centerline{\includegraphics[width=1\linewidth]{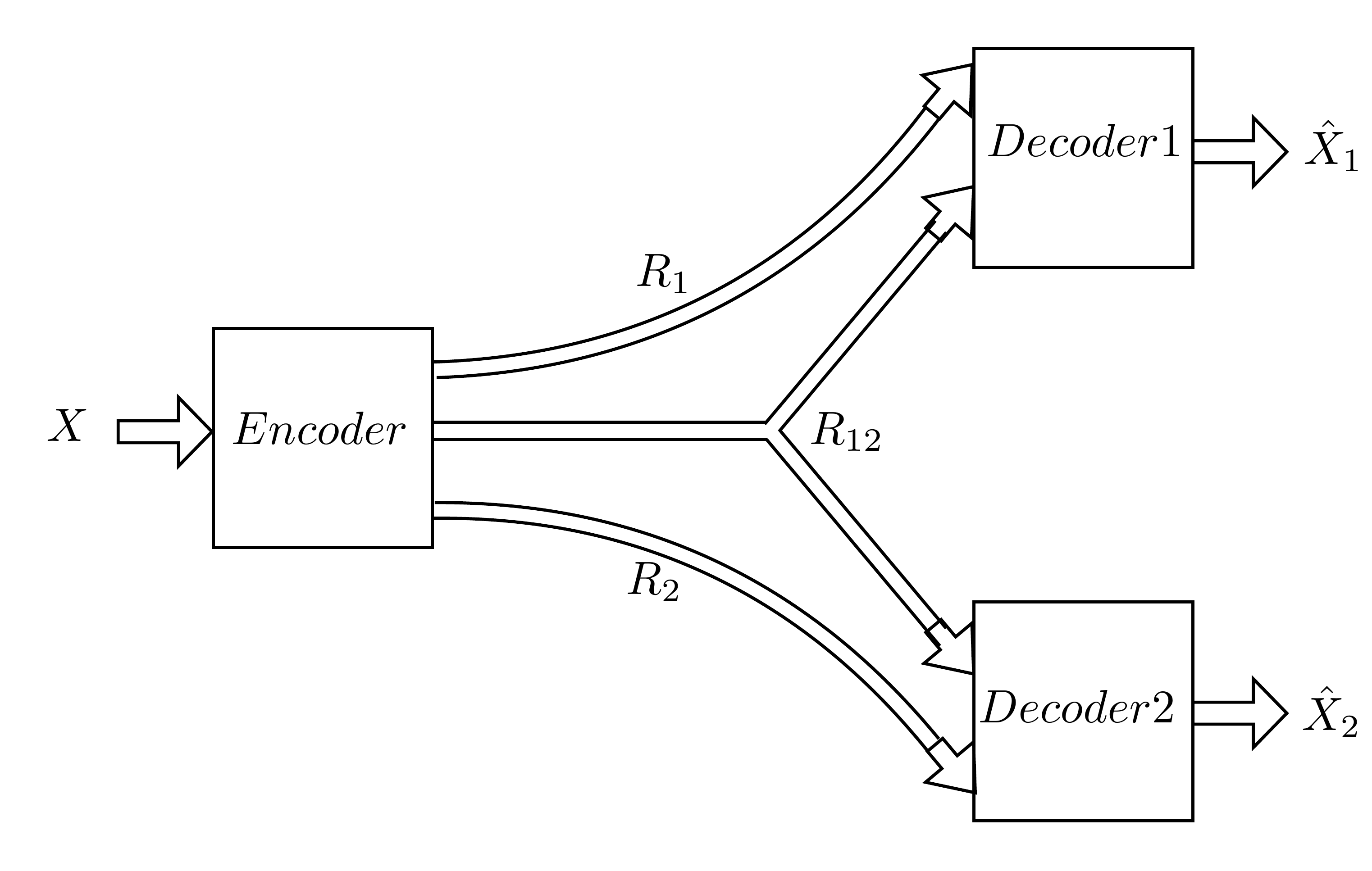}}
		\caption{Common information based layered coding paradigm with rate $R_{12}$ sent to both the decoders, and rates $R_1$ and $R_2$ sent to corresponding decoders. }
		\label{fig:wyner}
	\end{figure}
	
	\begin{figure}[t]	
		\begin{center}
			
			\includegraphics[width=0.5\textwidth]{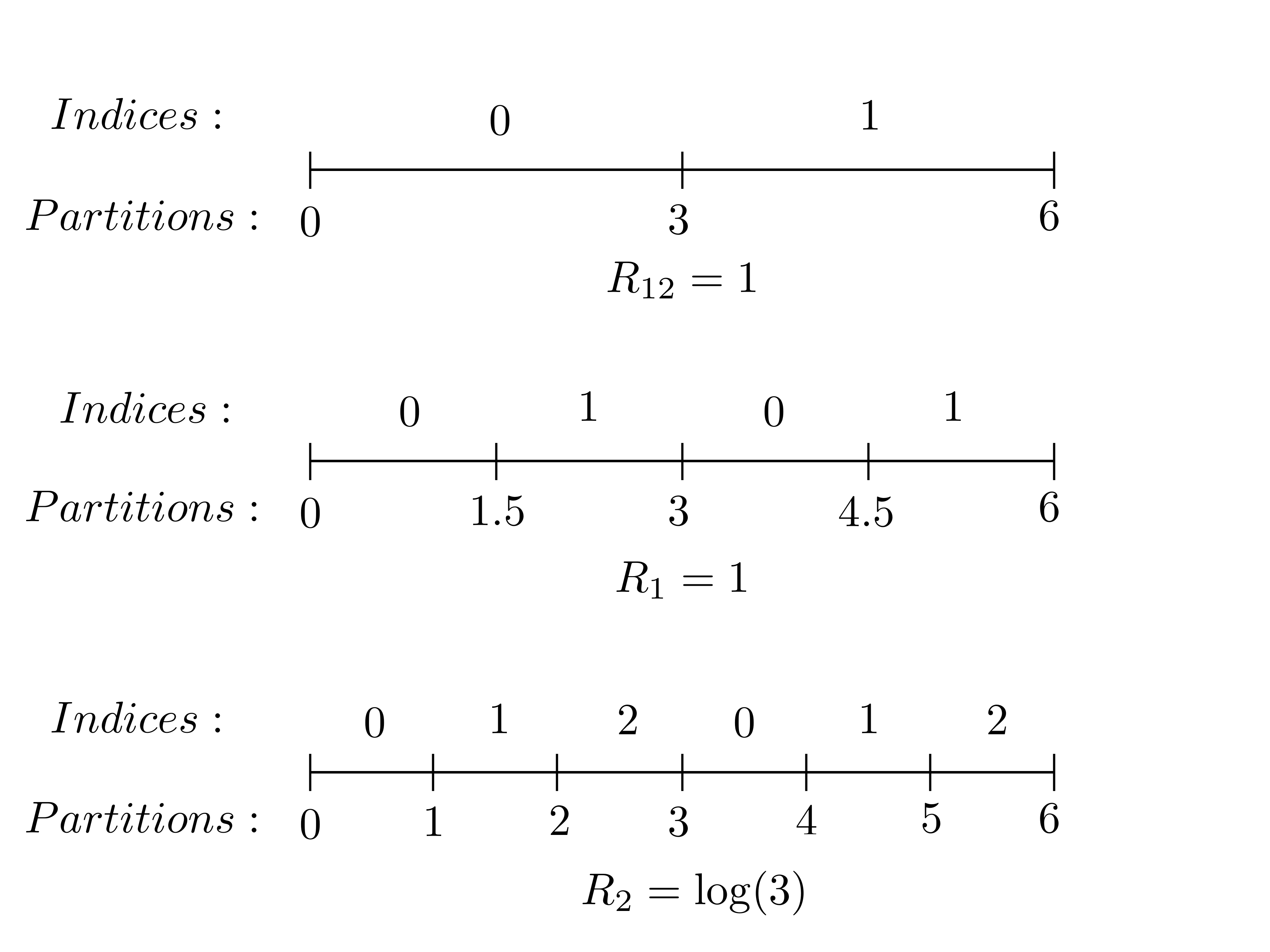}	
		\end{center}
		\centering
		\caption{Partitions for quantizing a uniformly distributed random variable, $U(0,6)$, using the common information based layered coding paradigm, with rate $R_{12}=1$ sent to both the decoders, and rates $R_1=1$ and $R_2=\log(3)$ sent to corresponding decoders.}
		\label{fig:CI}
	\end{figure}

	On the other hand independent coding is wasteful as there is obviously considerable overlap in information required to achieve the two distortions.  Instead we address this challenge with the common information based layered coding framework, wherein only part of the information required to achieve $D_1$ is sent to the decoder reconstructing at higher quality of $D_2$. This layered coding paradigm is illustrated in  Fig.~\ref{fig:wyner}, wherein three different packets are generated by the encoder at rates of:
	\begin{itemize}
		\item $R_1$, which is sent only to the decoder reconstructing at $D_1$,
		\item $R_2$, which is sent only to the decoder reconstructing at $D_2$, and
		\item $R_{12}$, which is sent to both the decoders.
	\end{itemize}
	This paradigm subsumes, as special cases, both conventional scalable coding (achieved when $R_1=0$) and non-scalable coding (achieved when $R_{12}=0$).
	Moreover, this framework provides an extra degree of freedom with which rate-distortion optimality at both the layers can be potentially achieved at a lower total transmit rate than non-scalable coding by appropriately designing the quantizers for all the layers.
	Continuing our toy example of quantizing a uniformly distributed source at receive rates of 2 and $\log(6)$, employing the layered coding paradigm, with rate $R_{12}=1$, which is sent to both decoders, and rates $R_1=1$ and $R_2=\log(3)$, which are sent to corresponding decoders (as depicted in Fig.~\ref{fig:CI}), achieves overall quantizers with partitions same as the optimal individual quantizers, ensuring rate-distortion optimality at the decoders, but with a 22\% reduction in total transmit rate compared to non-scalable coding. This example clearly demonstrates that with appropriately designed quantizers the proposed layered coding paradigm can efficiently extract information common to different quality levels. Fig.~\ref{fig:NS_CI_New}  depicts the same notion for the vector cases. 
	
	With this motivation, next we illustrate joint design of scalar and vector quantizers,  we propose a technique for low complexity design for Laplacian sources,  and finally we propose an iterative approach for joint design of vector quantizers for multi-layers.  
	

	\begin{figure*}[t]
		
		\begin{center}
			
			\includegraphics[width=1\textwidth]{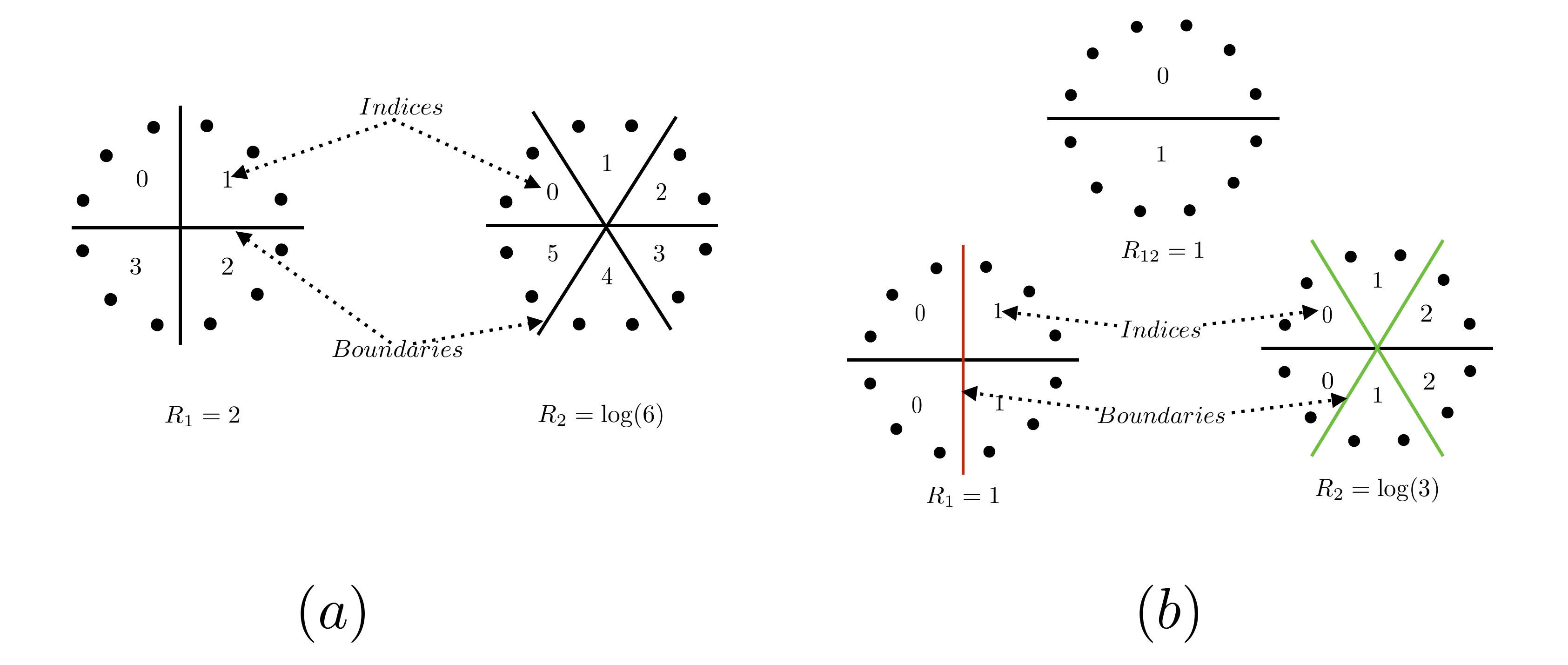}
			
		\end{center}
		
		\centering
		\caption{Partitions for quantizing a 2D training set shown above as large black dots. (a) depicts individual partitions at rates $R_1=2$ and $R_2=\log(6)$. (b) depicts partitions for the common information based layered coding paradigm, with rate $R_{12}=1$ sent to both the decoders, and rates $R_1=1$ (via red partitions) and $R_2=\log(3)$ (via green partitions) sent to corresponding decoders.}
		\label{fig:NS_CI_New}
	\end{figure*}


\begin{figure*}[t]
		\centering
		\centerline{\includegraphics[width=1\linewidth]{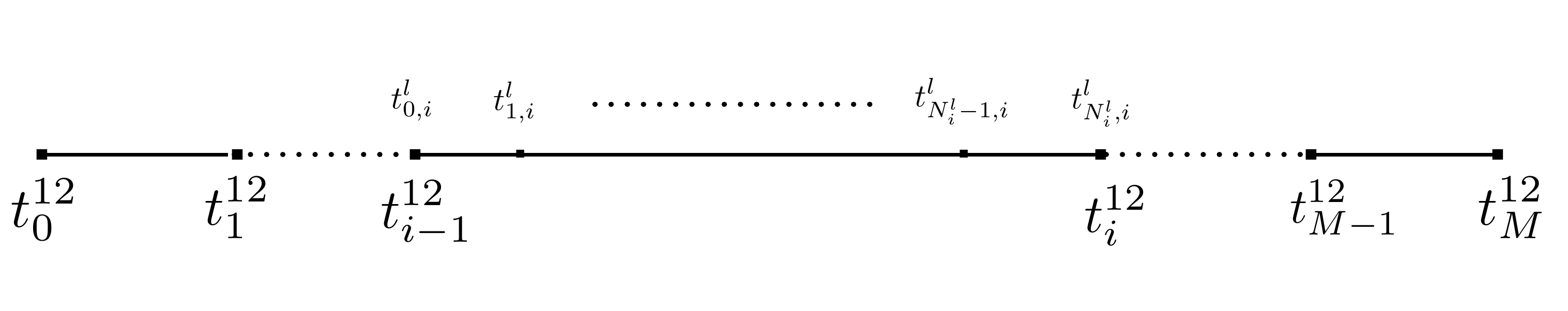}}
		\caption{Quantizer partition boundaries for subdividing a common layer quantization interval. The decision points of the common layer are shown below the line, and above it for individual layer $l$. }
		\label{fig:partition}
	\end{figure*}

\begin{section}{Joint Design of Quantizers for Common Information Based Layered Coding}
	We are primarily interested in designing Entropy Constrained Scalar/Vector Quantizer for fixed received rates, $R_{r_1}=R_{12}+R_1 = c_1$ and $R_{r_2}=R_{12}+R_2 = c_2$, at decoder $1$ and $2$, respectively. This constraint translates into a trade off between total transmit rate, $R_t= R_{12}+R_1+R_2=c_1+c_2-R_{12}$ (or equivalently $R_{12}$), and the distortions $D_1$ and $D_2$ at the decoders, with the two extremes of: i)~non-scalable coding, incurring the highest $R_t =c_1+c_2$ (or equivalently  lowest $R_{12}=0$), but yielding the lowest distortion $D^*(c_1)+D^*(c_2)$, where $D^*(\cdot)$ is the optimal distortion at a given rate; and ii)~conventional scalable coding, with the lowest $R_t = c_2$ (or equivalently  highest $R_{12}=c_1$), but yielding high distortion at the enhancement layer due to the scalable coding penalty. To design our layered coding quantizers to optimize this trade off subject to prescribed receive rate constraints, we define our cost function as,
	\begin{equation}
		\label{eqn:j}
	J=  a_1 D_1 + a_2 D_2 + \lambda_1(R_1+R_{12})+\lambda_2(R_2+R_{12})+\lambda_{12}R_{12},
	\end{equation}
	
	where $a_1$ and $a_2$ control the importance of $D_1$ and $D_2$, respectively, and $\lambda_1$, $\lambda_2$, and $\lambda_{12}$ constrain $R_{r_1}$, $R_{r_2}$ and $R_{12}$, respectively.  Also note that one of the five weights, $a_1, a_2, \lambda_1, \lambda_2, \text{and } \lambda_{12}$, in the cost function \eqref{eqn:j} is obviously redundant, but we decided to keep this harmlessly redundant notation for consistency with past literature. Quantizers designed to minimize the cost function \eqref{eqn:j} achieve the best weighted sum of distortions at the decoders (as per $a_1,a_2$), for a given total transmit rate (or equivalently $R_{12}$) and receive rates.
	
	\subsection{Quantizer Design for Scalar Sources:}
	 We design the quantizers iteratively, with one quantizer updated in each iteration to minimize the overall cost function while the others are fixed, and repeating the iterations until convergence.
	In the following subsections, superscripts ``1'', ``2'' and ``12'' refer to parameters of individual layers $1$,  $2$, and common layer, respectively. 
	
	\subsubsection{Individual Layer Quantizer Design}
	\label{sec:indq}
	Given a common layer quantizer with $M$ intervals and partition boundaries $ t_i^{12},~i= 0,1, \cdots, M$, we need to design ECSQ for each interval, $(t_{i-1}^{12},t_i^{12}),~i= 1,2, \cdots, M$ and each individual layer, $l=1,2$. Note that, given the common layer, the individual quantizers of layers $1$ and $2$ have absolutely no effect on each other's distortion and rate and can be optimized independently. Hence the cost function for optimization of individual quantizers for each of the layer simplifies to $ J_l=a_l D_l+\lambda_l R_l$ for $l=1,2$.
	
	We employ the well known iterative ECSQ design technique for each interval, $(t_{i-1}^{12},t_i^{12}),~i= 1,2, \cdots, M$ and each individual layer, $l=1,2$. Let $N_i^l$ be the number of subintervals for layer $l$ at common layer interval  $(t_{i-1}^{12},t_i^{12})$. Partition boundaries for layer $l$ at interval $(t_{i-1}^{12},t_i^{12})$ are shown in Fig.~\ref{fig:partition} as $t_{q,i}^l,~q= 0,1,2, \cdots, N_i^l$, and we can clearly see that $t_{0,i}^l=t_{i-1}^{12}$ and $t_{N_i^l,i}^l=t_{i}^{12}$. 
	The iterative ECSQ algorithm employed is described below:
	\begin {enumerate}
	\item Initialize the partition boundaries, $t_{q,i}^l,~q= 1,2, \cdots, N_i^l-1$.
	\item Calculate $N_i^l$ representative levels $x_{q,i}^l$ and  subinterval probabilities $p_{q,i}^l$ for $q=1,2, \cdots, N_i^l$:
	\begin{equation}
	 x_{q,i}^l = \frac{\int_{t_{q-1,i}^l}^{t_{q,i}^l} x f(x) dx}{\int_{t_{q-1,i}^l}^{t_{q,i}^l}  f(x) dx} 
	\end{equation}
	
	\begin{equation}
	 p_{q,i}^l=\int_{t_{q-1,i}^l}^{t_{q,i}^l}  f(x) dx , ~~~ q=1,2, \cdots, N_i^l,
	\end{equation}
	where $f(x)$ denotes the source distribution. 
	\item Calculate partition boundaries, $t_{q,i}^l; ~ q= 1,2, \cdots, N_i^l-1$:
	\begin{equation}
	 t_{q,i}^l = \frac{x_{q,i}^l +x_{q+1,i}^l }{2}  - \frac{\lambda_l}{a_l} * \frac{\log_2 p_{q,i}^l - \log_2 p_{q+1,i}^l}{2(x_{q,i}^l -x_{q+1,i}^l )},
	\end{equation}
	\item Repeat $(2)$ and $(3)$ until there is no further reduction in cost (or a prescribed stopping criterion is met).
\end{enumerate}

\subsubsection{Common Layer Quantizer Design} 
\label{sec:cq}
Unlike the individual layer quantizer design, any partition change in the common layer quantizer has impact on all the distortions and rates, hence the standard ECSQ update rules are not applicable. Given quantizer partition boundaries and representative levels at both individual layers, optimal partition points at common layer, $ t_i^{12},~i= 1,2, \cdots, M-1$,  (as derived in Appendix~\ref{apn:cpart})  is updated as: 
\begin{dmath}
	t_i^{12} = \frac{(a_1 (x_{1,i+1}^1)^2+a_2 (x_{1,i+1}^2)^2)-(a_1 (x_{N_i^1,i}^1)^2+a_2 (x_{N_i^2,i}^2)^2)}{2((a_1 {x_{1,i+1}^1}+a_2 {x_{1,i+1}^2})-(a_1 {x_{N_i^1,i}^1}+a_2 {x_{N_i^2,i}^2}))}  - \hspace{-1.5pt}
	\frac{\lambda_1 (\log_2 p_{1,i+1}^1 - \log_2 p_{N_i^1,i}^1) + \lambda_2 (\log_2 p_{1,i+1}^2 - \log_2 p_{N_i^2,i}^2)}{{2((a_1 {x_{1,i+1}^1}+a_2 {x_{1,i+1}^2})-(a_1 {x_{N_i^1,i}^1}+a_2 {x_{N_i^2,i}^2})})}\\
	\hspace{-1.2pt}
	-\frac{ \lambda_{12} (\log_2 p_{i+1}^{12} - \log_2 p_{i}^{12})}{{2((a_1 {x_{1,i+1}^1}+a_2 {x_{1,i+1}^2})-(a_1 {x_{N_i^1,i}^1}+a_2 {x_{N_i^2,i}^2})})},
\end{dmath}
where, $p_{i}^{12}=\int_{t_{i-1}^{12}}^{t_{i}^{12}}  f(x) dx,  ~~~ i=1,2, \cdots,M$ are based on previous values of $t_i^{12}$.

\subsubsection{Joint Design of Quantizers}
The overall algorithm for joint design of quantizers for all the layers is the following:
\begin {enumerate}
\item Initialize the partition boundaries for common layer, $t_i^{12},~ i= 1,2, \cdots, M-1$.
\item Update the individual layer quantizers using the approach described in Section~\ref{sec:indq} to calculate $x_{q,i}^l$, $p_{q,i}^l$  for $q=1,2, \cdots, N_i^l$ and $t_{q,i}^l$  for $q=1,2, \cdots, N_i^l-1$ for both $l= 1,2$ and all $ i= 1,2, \cdots, M$.
\item Update the common layer quantizer using formula given in Section~\ref{sec:cq} to calculate $t_i^{12}, ~ i= 1,2, \cdots, M-1$.
\item Repeat $(2)$ and $(3)$ until there is no further reduction in cost (or a prescribed stopping criterion is met).
\end{enumerate}
Note that during the ECSQ design at common or individual layers, the number of partitions, i.e., $M$ and $N_i^l$ are not known. To circumvent this, we simply initialize our algorithm with a large number of partitions, and in each iteration, based on the given $a_1, a_2, \lambda_1, \lambda_2,  \text{and } \lambda_{12}$, the algorithm reduces the number of partitions, if necessary. 

\subsection{Low Complexity Quantizer Design for Laplacian Sources:}
For the practically important case of Laplacian source distribution we could have the low complexity non-joint design  as below:
\begin{enumerate}
	\item For the common layer, we estimate the best step size for the DZQ at a given rate, $R_{12}$.
	\item For the two individual layers, we design optimal entropy constrained quantizers for each common layer quantizer interval, at their corresponding rates of $R_1$ and $R_2$. Specifically, we iteratively optimize the quantizer interval partitions and reconstruction points to minimize the entropy constrained distortion, with smart initializations of,\label{stp2}
	\begin {itemize}
	\item A DZQ for the dead zone interval, and 
	\item A uniform quantizer for other intervals,
	\end {itemize}
	of the common layer quantizer.
	\item We then numerically estimate the optimal common layer rate, by trying multiple allowed common rates and selecting the one that results in minimum cost $J$.
\end{enumerate}
Since the dead zone interval contains a truncated Laplacian distribution and other intervals contain a truncated exponential distribution, we select the initializations in Step~\ref{stp2} above to be the optimal entropy constrained quantizers of their corresponding non-truncated distributions.

Note that, we can achieve a non-zero common rate with negligible $\Delta D$, if the DZQ at rate $R_{12}$ is such that all its partition points align closely with partition points of DZQ at both rate $c_1$ and $c_2$. Conditions for such an alignment of partitions between two DZQ were derived in \cite{embedded} as, the dead-zone of the coarser DZQ has to be divided into $2n+1$ intervals, and other intervals of this DZQ have to be divided into $m+1$ intervals, with $2n/m=z$, where, $n$ and $m$ are integers, and $z$ is the ratio of the dead-zone interval length over other intervals' length. Our design technique numerically estimates the common layer DZQ which closely satisfies these conditions with DZQ at both rate $c_1$ and $c_2$.


Note that the proposed design technique does not ensure joint optimality of all the quantizers, particularly since we independently optimize the common layer quantizer (e.g., DZQ for Laplacian) without considering its effect on other layers. Despite this assumption we obtain considerable performance improvements. 

		\subsection{Quantizer Design for Vector Sources:} \label{sec:qdvs}

		
We design the quantizers iteratively by alternating between the steps of optimal partitioning and optimal codebook estimation, until convergence, similar to the generalized Lloyd algorithm \cite{lbg}. 
		
		
		We design an $M$-codebook ECVQ for the common layer, and for each common layer region, $i= 1,2, \cdots, M$ we design an ECVQ for each individual layer, $l=1,2$. Let $N_i^l$ be the number of subregions for layer $l$ at common layer region $i$. Let $\mathbf{c}^l_{q,i}$ and $p^l_{q,i}$ be the representative levels, and subregion probabilities, respectively, for $q= 1,2, \cdots, N_i^l$, $i= 1,2, \cdots, M$, and $l=1,2$. Finally, let $p^{12}_i$ be the common layer regions probabilities for $i= 1,2, \cdots, M$. Following is the overall iterative algorithm:
		\begin{enumerate}
			\item Guess an initial set of representative levels $\mathbf{c}^l_{q,i}$ and their corresponding probabilities $p^l_{q,i}$ for $q= 1,2, \cdots, N_i^l$, $i= 1,2, \cdots, M$, and $l=1,2$.
			\item Assign each sample $\mathbf{x}_t$ in training set $S$ to common layer region $i$ and subregions' representatives $\mathbf{c}_{q_1,i}^1$ and $\mathbf{c}_{q_2,i}^2$, to minimize the Lagrangian cost:
			\begin{align}
				J_{\mathbf{x}_t} (\mathbf{c}_{q_1,i}^1 , \mathbf{c}_{q_2,i}^2) &= (a_1||\mathbf{x}_t-\mathbf{c}_{q_1,i}^1||^2 +  a_2||\mathbf{x}_t-\mathbf{c}_{q_2,i}^2||^2 ) - \nonumber \\
				& (\lambda_1log_2 p^1_{q_1,i} + \lambda_2log_2 p^2_{q_2,i} +  \lambda_{12}log_2 p_i^{12})
			\end{align}
			\item Find subregion $B^l_{q,i}$:
			
			$B^l_{q,i}=\{\mathbf{x} \in S: \mathbf{x} \text{ is assigned to }\mathbf{c}^l_{q,i} \}$
			
			for $q= 1,2, \cdots, N_i^l$, $i= 1,2, \cdots, M$, and $l=1,2$.
			\item Calculate new representative levels and probabilities:
			\begin{eqnarray}
				\mathbf{c}_{q,i}^l&=&\frac{1}{||B^l_{q,i}||} \sum_{\mathbf{x}\in B^l_{q,i}}\mathbf{x},\\
				p_{q,i}^l&=&\frac{||B^l_{q,i}||}{||S||},
			\end{eqnarray}
			for $q= 1,2, \cdots, N_i^l$, $i= 1,2, \cdots, M$, and $l=1,2$.
			Also $p_i^{12}=\sum\limits_{q=1}^{N_i^1}p_{q,i}^1$ for $i= 1,2, \cdots, M$
			\item Repeat steps 2, 3, and 4 until there is no further reduction in overall cost:
			$J=\sum\limits_{\mathbf{x}_t} J_{\mathbf{x}_t}$
		\end{enumerate}

		\textbf{Notes on the designed layered coding quantizers:}
		
		The common layer quantizer does not reconstruct the source, leading to flexibility in structure of the quantizer regions. That is, we could potentially have irregular quantizers in the common layer alone, or both common and individual layers, which combine together to result in a overall regular quantizer. Surprisingly, despite this flexibility, we show that the following lemma on regularity of the common layer quantizer is true with one simple assumption.
		\begin{lemma}
			If the final overall quantization cells are regular, then so are the optimal common layer quantization cells.
		\end{lemma}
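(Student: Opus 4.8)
The plan is to argue by contradiction: suppose the overall cells $\{B^1_{q,i}\}$ and $\{B^2_{q,i}\}$ (the intersections of individual-layer cells with common-layer regions) are all regular — i.e., convex/interval-like in the relevant sense — yet some optimal common-layer region $R_i = \bigcup_q B^1_{q,i}$ (equivalently $\bigcup_q B^2_{q,i}$) is \emph{not} regular. I would then exhibit an alternative common-layer partition that strictly lowers the cost $J$ of \eqref{eqn:j}, contradicting optimality of the common-layer quantizer while holding the individual-layer codebooks fixed (which is exactly the optimization the common-layer update solves). The "one simple assumption" referenced is presumably that the overall cells are regular \emph{with respect to the same nearest-neighbor-in-Lagrangian-cost rule}, so that each overall cell is determined by its pair of representatives $(\mathbf{c}^1_{q,i},\mathbf{c}^2_{q,i})$ and the probabilities; I would state this explicitly at the top of the proof.

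The key steps, in order: (1) Fix the individual-layer representatives $\mathbf{c}^l_{q,i}$ and probabilities $p^l_{q,i}$, and recall from Section~\ref{sec:qdvs} that a sample $\mathbf{x}$ is assigned to the triple $(i,q_1,q_2)$ minimizing $J_{\mathbf{x}}(\mathbf{c}^1_{q_1,i},\mathbf{c}^2_{q_2,i})$. Observe that the common-layer index $i$ enters this cost \emph{only} through the additive term $-\lambda_{12}\log_2 p_i^{12}$ and through the admissible sets of $(q_1,q_2)$ pairs available within region $i$. (2) Show that, given the overall (fine) partition is fixed and regular, each fine cell $B$ is labeled by a pair of individual representatives; the only freedom left in the common layer is how to \emph{group} these fine cells into $M$ regions. (3) Argue that the optimal grouping — the one minimizing $\sum_{\mathbf{x}} J_{\mathbf{x}}$ — must assign each fine cell to whichever common-layer region $i$ gives it the most favorable $-\lambda_{12}\log_2 p_i^{12}$ consistent with its individual labels; since the overall cells are regular (intervals / convex polytopes) and the regions are unions of them, a non-regular common-layer region would force some fine cell to be "routed" through a suboptimal common index, and reassigning it to the geometrically adjacent region strictly decreases cost. (4) Conclude the optimal common-layer cells are regular.

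The main obstacle I anticipate is step (3): making precise why a non-regular union of regular cells \emph{must} be strictly improvable. In the scalar case this is clean — the common-layer partition points are explicitly the midpoint-plus-correction formula derived in Appendix~\ref{apn:cpart}, and one checks that a non-contiguous assignment of a subinterval violates that stationarity condition. In the vector case one needs the exchange argument: take two fine cells $B, B'$ assigned to common regions $i \neq j$ with $B$ "geometrically inside" region $j$'s convex hull; swapping their common-layer labels leaves the individual-layer costs untouched (those depend only on $\mathbf{c}^l$, already fixed) and changes only the $\lambda_{12}\log_2 p^{12}$ terms, and I would show the resulting change in the convexity/partition structure contradicts either optimality or the assumed regularity of the overall cells. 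Handling ties (zero-probability boundaries) and the degenerate case $\lambda_{12}=0$ will need a short remark, but should not obstruct the argument.
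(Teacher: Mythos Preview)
Your proposal misses the paper's key move and, as you yourself flag, step~(3) does not go through as written. The exchange argument --- take a fine cell $B$ labeled $i$ that sits ``inside'' region $j$ and swap its common-layer label to $j$ --- does not give a clean strict improvement: the term $-\lambda_{12}\log_2 p_i^{12}$ depends on the grouping itself (the $p_i^{12}$ change when you move mass between regions), so you cannot argue cell-by-cell that each swap helps. Moving probability from one common region to another can raise or lower $R_{12}$ depending on the relative masses, and there is no geometric monotonicity to exploit. Your fallback, that the swap ``contradicts either optimality or the assumed regularity,'' is where the actual work would have to happen, and it is left entirely vague.

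The paper's argument is much simpler and avoids regrouping altogether: instead of reshuffling fine cells among the existing $M$ common regions, it \emph{splits} the offending irregular common region $j$ into its regular pieces $j_1,j_2$, increasing the number of common regions. The crucial observation --- which you do not use --- is that a regular overall cell cannot straddle two disconnected pieces of $j$, so every individual-layer subcell of $j$ already lies wholly in $j_1$ or wholly in $j_2$. Hence the split leaves the collection of overall cells, and therefore $D_1$, $D_2$, and the receive rates $R_l+R_{12}=-\sum_{i,q}p^l_{q,i}\log_2 p^l_{q,i}$, exactly unchanged. The only quantity that moves is $R_{12}=-\sum_i p_i^{12}\log_2 p_i^{12}$, which strictly increases under refinement (equivalently, the transmit rate $R_t$ strictly decreases). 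That is the contradiction. The idea you needed was to allow $M$ to grow, not to rearrange within a fixed $M$.
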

		\begin{proof}
			Assume common layer's region $j$ is not regular. Without loss of generality, we assume it consists of two regular subregions $j_1$ and $j_2$.
			Let us define a new common layer quantizer partition by dividing region $j$ into two regions $j_1$ and $j_2$.  
			Since we assumed the final overall quantization regions are regular, subregions of $j$ in individual layers are forced to be wholly contained within either region $j_1$ or $j_2$. Which implies that dividing common layer region $j$ into two regions $j_1$ and $j_2$, does not alter the final overall quantization regions. Since the distortions at the decoder depend only on the final overall quantization regions, they do not change when the common layer region $j$ is divided into two regions $j_1$ and $j_2$. Also note that, receive rate at decoder $l$ is:
			\[R_l+R_{12} = - \sum_{i=1}^{M} \sum_{q=1}^{N_i^l} p_{q,i}^l \log_2 p_{q,i}^l , ~~~\text{for} ~~l=1,2\]
			Since the final overall quantization regions are not altered, rate contributions from subregions of common layer region $j$ (via $p_{q,j}^l$)
			gets redistributed to rate contributions from subregions of common layer region $j_1$ (via $p_{q,j_1}^l$) and region $j_2$ (via $p_{q,j_2}^l$), and thus $R_l+R_{12}$ for $l=1,2$, do not change.
			However, we clearly increase $R_{12} = -\sum_{i=1}^M p_i^{12}  \log_2  p_i^{12}$ by dividing the $j$th region into two regions. That is, we obtain a new set of quantizers with the same distortions and received rates as before, but lower transmit rate (or equivalently higher $R_{12}$), which means the cost $J$ is reduced. Hence, the optimal common layer quantizer cannot be irregular.
		\end{proof}

		\subsection{Multiple Quality Level Layered Coding}
		It is not straightforward to extend the concept of common information to more than two quality levels. The biggest challenge is because shared information can exist between any of the levels. Hence clearly, just one common layer between all the quality levels cannot exploit all the redundancies present. In fact, common information can exist between any subset of quality levels, leading to number of bit-stream layers growing combinatorially with number of quality levels.
Our lab has recently demonstrated that this ``combinatorial message sharing'' can be used to improve theoretically achievable region for the related problem of multiple descriptions coding \cite{pro27}. While this approach is useful to obtain asymptotic bounds, it is intractable for real world applications to maintain such large number of bit-stream layers.
Hence we propose employing a linearly growing rate-splitting approach where each layer receives an individual packet for itself and all the common packets received by lower layers (layers with higher distortion constraints). Specifically, with $L$ decoders, we plan to generate $2L-1$ packets consisting of $L$ individual packets at rates $R_1$ to $R_L$ and $L-1$ common packets at rates $R_{123\cdots R_L}$ to $R_{(L-1)L}$ send to decoders as following:
		\begin{align}
		Q = \{& 1 \\ \nonumber
		& 2, \\ \nonumber
		& \quad .  \\  \nonumber
		& \quad . \\  \nonumber
		& \quad . \\  \nonumber
		& L-1, \\ \nonumber
		& L, \\ \nonumber	
		& 123\cdots L,  \\ \nonumber
		& 23 \cdots L,  \\  \nonumber
		& \quad .  \\  \nonumber
		& \quad . \\  \nonumber
		& \quad . \\  \nonumber
		& \left( L-1 \right)L\}
		\end{align}
		where for example $23 \cdots L$ means the packet sent to decoders $2$ till decoder $L$. Fig.~ \ref{fig:multi} depicts this scenario.
		
		\begin{figure}[t]
			\centering
			\centerline{\includegraphics[width=1\linewidth]{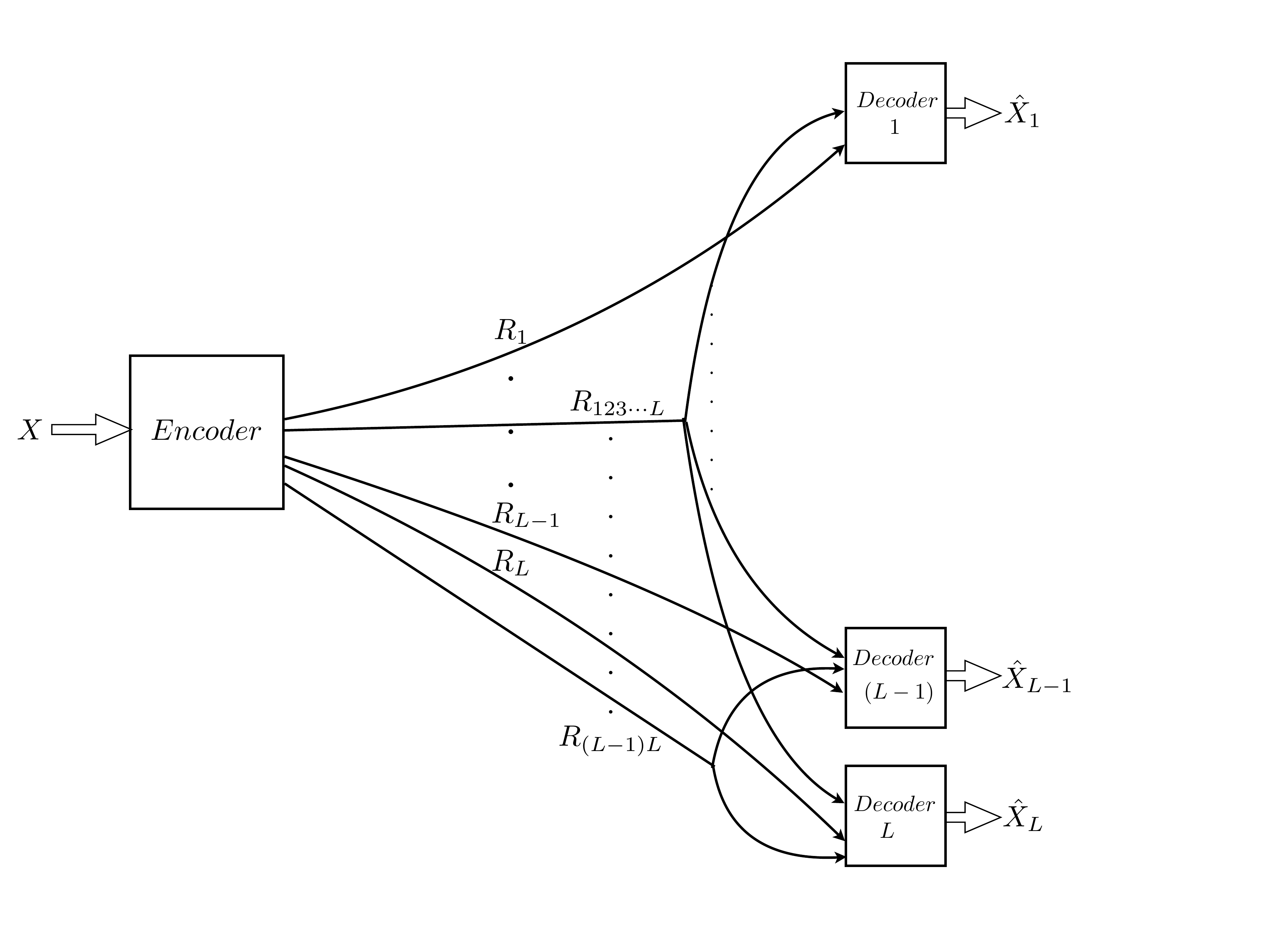}}
			\caption{Common information based layered coding paradigm for multi-layer scenario. }
			\label{fig:multi}
		\end{figure}
		
		Similar to previous part we could define the cost as 
		\begin{align}
		J = \sum_{l=1}^L a_lD_l + \sum_{q\in Q}\lambda _q R_q \label{eqn:multi-cost}
		\end{align}
		where the first and second terms in the cost $J$, represents the distortion and rate penalties respectively. Finally using the approach explained in Section \ref{sec:qdvs} we design the best entropy constrained vector quantizers. For the specific example of 3 quality level coding, the step 2) of algorithm in Section \ref{sec:qdvs}, will assign each sample $\mathbf{x}_t$ in training set $S$ to region $i$ of common layer shared between all three levels (that is sent at rate $R_{123}$), subregion $(q_1,i)$ within $i$ for private layer of quality level 1 (that is sent at rate $R_1$), subregion $(q_{23},i)$ within $i$ for common layer shared between quality level 2 and 3 (that is sent at rate $R_{23}$), subregion $(q_2,q_{23},i)$ within $(q_{23},i)$ for private layer of quality level 2 (that is sent at rate $R_2$) and subregion $(q_3,q_{23},i)$ within $(q_{23},i)$ for private layer of quality level 3 (that is sent at rate $R_3$), to minimize the cost in \eqref{eqn:multi-cost}. Other steps get extended to these many layers appropriately.
	\end{section}

	\begin{section}{Experimental Results}
		\label{sec:results}

	\subsection{Results for the Joint Design of Scalar Quantizers:}
		\label{sec:results}
		We provide experimental results for Laplacian sources and set $\lambda=1$ and measure the distortions at each decoder in dB.

		In our first experiment, we fixed the receive rates to $c_1=2$ and $c_2=3$. Fig.~\ref{fig:lap} shows the total transmit/storage rate $R_t$ versus  the excess distortion  $\Delta D = D_1+D_2- D^*(c_1)-D^*(c_2)$ curve obtained by employing quantizers designed by our proposed iterative technique at various common layer rates ($R_{12}$) ranging from 0 bits (i.e., non-scalable coding) to 2 bits (i.e., scalable coding), and the convex hull between non-scalable and scalable coding, which is obtained using the time sharing argument. Note that scalable coding employed by current standards is around 1.5 dB worse than the efficient scalable coding subsumed by our approach, which itself has around 0.8 dB distortion loss compared to non-scalable coding. The results clearly demonstrate that, by exploiting the concept of common information, the proposed approach can achieve all intermediate operating points at considerably better performance compared to the convex hull between non-scalable and scalable coding.
		
		\begin{figure}[t]
			\centering
			\centerline{\includegraphics[width=.5\textwidth]{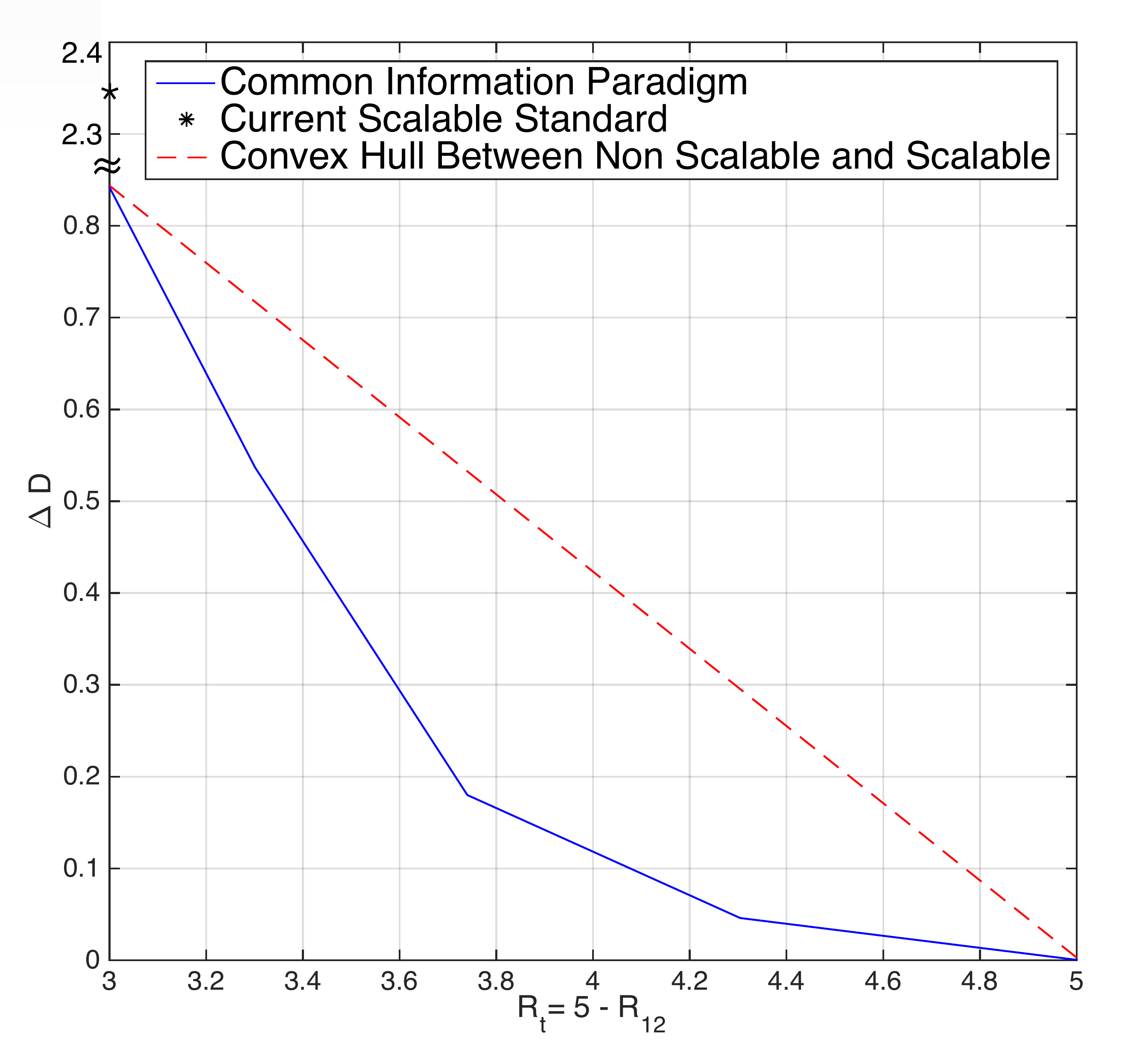}}
			\caption{Total transmit rate vs distortion deviations at the decoders. }
			\label{fig:lap}
		\end{figure}
		\begin{table*}[t]
			\centering
			\scalebox{1}{
				\begin{tabular} {|c|c|c|c|}
					\hline
					&Non scalable total transmit rate & Proposed method total transmit rate &  Total transmit rate reduction \\ 
					&$R_{12}+R_1+R_2= R_t^{NS}$ & $R_{12}+R_1+R_2=R_t^P$ & $(R_t^{NS}-R_t^P)/R_t^{NS} $\\ \hline
					$R_{r_1}=2, R_{r_2}=4$&$0 + 2 + 4= 6 $& $0.8 + 1.2+ 3.2 = 5.2$ &  $(6-5.2)/6= 13$\% \\ \hline
					$R_{r_1}=2.1, R_{r_2}=3.2$&$0 + 2.1 + 3.2 = 5.3 $& $0.8 + 1.3+ 2.4 = 4.5$ &  $(5.3-4.5)/5.3= 15$\% \\ \hline
					$R_{r_1}=2.5 , R_{r_2}=4.3$&$0 + 2.5+ 4.3 = 6.8 $& $1.1 + 1.4+ 3.2= 5.7$ &  $(6.8-5.7)/6.8= 16 $\% \\ \hline
					$R_{r_1}=2.5 , R_{r_2}=3.2$&$0 + 2.5 + 3.2= 5.7 $& $1.2 + 1.3+ 2 = 4.5$ &  $(5.7-4.5)/5.7= 21 $\% \\ \hline
					$R_{r_1}=3 , R_{r_2}=3.6$&$0 + 3 + 3.6= 6.6 $& $1.5 + 1.5+ 2.1 = 5.1$ &  $(6.6-5.1)/6.6= 23 $\% \\ \hline
					$R_{r_1}=2.7, R_{r_2}=3.4$&$0 + 2.7 + 3.4= 6.1 $& $1.4 + 1.3+ 2 = 4.7$ &  $(6.1-4.7)/6.1= 23 $\% \\ \hline
					$R_{r_1}=2.4 , R_{r_2}=3$&$0 + 2.4+ 3 = 5.4 $& $1.5 + 0.9+ 1.5= 3.9$ &  $(5.4-3.9)/5.4= 28 $\% \\ \hline
					$R_{r_1}=3.1 , R_{r_2}=3.6$&$0 + 3.1+ 3.6= 6.7 $& $2 + 1.1+ 1.6 = 4.7$ &  $(6.7-4.7)/6.7= 30 $\% \\ \hline

				\end{tabular}}
				\caption{Total transmit rate for non-scalable coding and proposed paradigm operating with negligible loss in distortion, for joint design approach.}
				\label{tab:re}
			\end{table*}
			
			Also as seen in Fig.~\ref{fig:lap}, we obtain an interesting operating point at $R_t= 4.3$ (or $R_{12}=0.7$), with distortions very close to non-scalable coding, but with a $14\%$ reduction in total transmit rate compared to non-scalable coding. We thus conducted more experiments with different fixed receive rate constraints and tabulated the corresponding results of transmit rate savings with negligible distortion loss in Table~\ref{tab:re}. The significant transmit rate savings ranging from 13\% to 30\% clearly demonstrate the capability of the proposed technique to efficiently extract the common information between different quality levels. These savings translate to significant operating cost reduction at data centers for storage, and transmitting to and caching at intermediate nodes for content distributors who are currently coding individual non-scalable copies at different quality levels.
			
			\subsection{Results for Low Complexity Quantizer Design for Laplacian Sources:}		
			In our experiments we used a Laplacian source with $\lambda=1$ and the distortions at each decoder are measured in dB. For our first experiment we used fixed receive rates of $c_1=1.6$ and $c_2=2.8$. In Fig.~\ref{fig:lap0} we plot, the $R_t$ versus $\Delta D$ curve obtained by employing quantizers designed by our proposed technique at various common layer rates ($R_{12}$) ranging from 0 bits (i.e., non-scalable coding) to 1.6 bits (i.e., scalable coding using CELQ), and the convex hull for the proposed paradigm, which is obtained using the time sharing argument. Note that scalable coding employed in current standards has around 1.5 dB distortion loss compared to efficient scalable coding, which itself has around 0.8 dB distortion loss compared to non-scalable coding. The proposed technique can operate at all points along the convex hull and at considerably better performance compared to the scalable coding of current standards.
			
%

			\begin{figure}[t]
				\centering
				\centerline{\includegraphics[width=.5\textwidth]{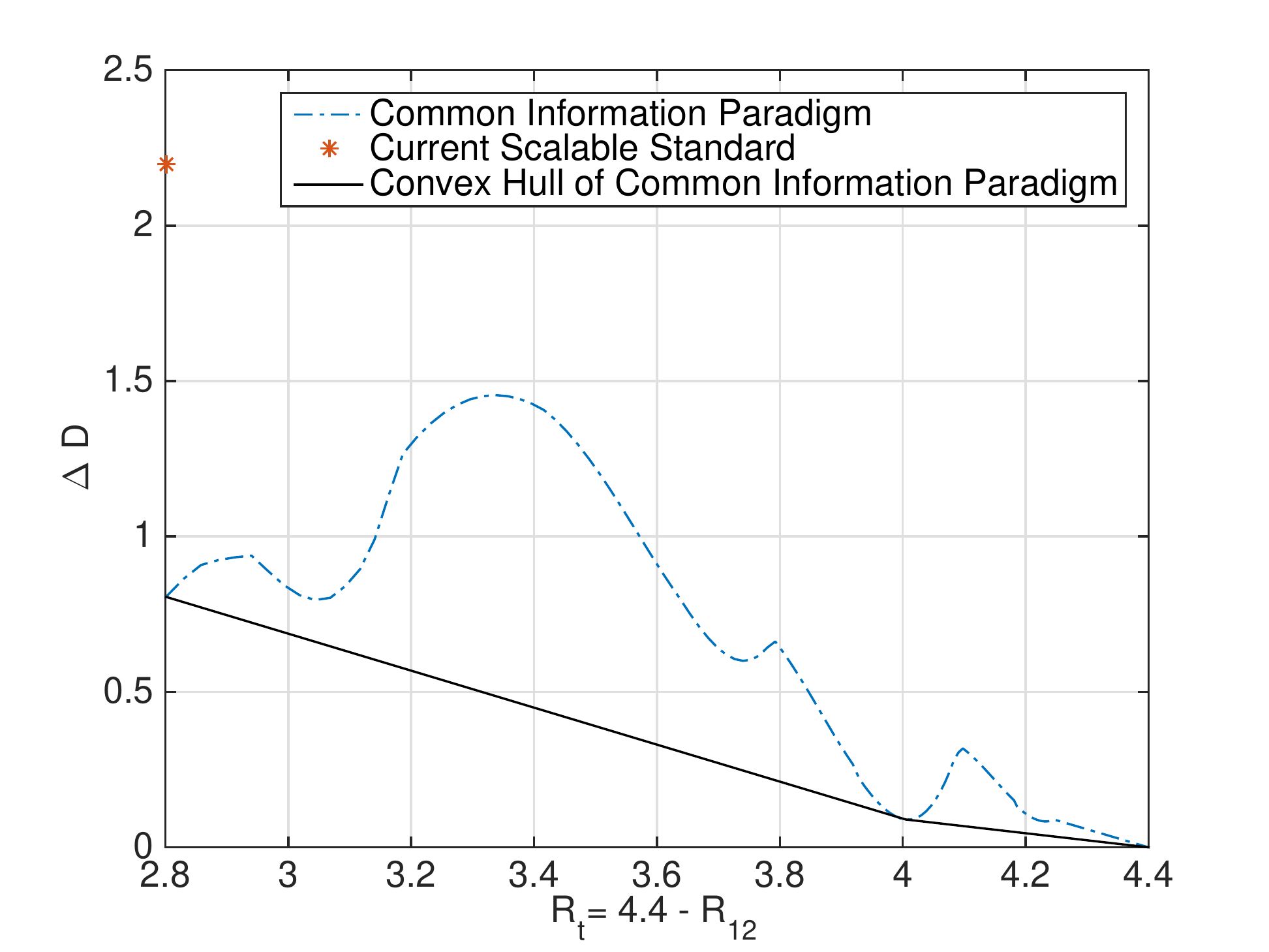}}
				\caption{Total transmit rate vs distortion deviations at the decoders. 
				}
				\label{fig:lap0}
			\end{figure}
			
			Moreover, note that in Fig.~\ref{fig:lap0} at $R_t= 4$ or equivalently $R_{12}=0.4$, we obtain distortions that are very close to that of non-scalable coding at a $9 \%$  reduction in total transmit rate compared to that of non-scalable coding. We conducted another experiment with different fixed receive rate combinations and obtained similar results of transmit rate savings with negligible distortion loss, which are shown in Table~\ref{tab:re111} to demonstrate the capability of the proposed technique to efficiently extract the common information between different quality levels.
			\begin{table*}[t]
				\centering
				\scalebox{1}{
					\begin{tabular} {|c|c|c|c|}
						\hline
						&Non scalable total transmit rate & Proposed method total transmit rate &  Total transmit rate reduction \\ 
						&$R_{12}+R_1+R_2= R_t^{NS}$ & $R_{12}+R_1+R_2=R_t^P$ & $(R_t^{NS}-R_t^P)/R_t^{NS} $\\ \hline
						$R_{r_1}=1.6 , R_{r_2}=2.8$&$0 + 1.6 + 2.8 = 4.4 $& $0.4 + 1.2+ 2.4 = 4$ &  $(4.4-4)/4.4= 9 $\% \\ \hline
						$R_{r_1}=1.5 , R_{r_2}=2.3$&$0 + 1.5+ 2.3 = 3.8 $& $0.3 + 1.2+ 2= 3.5$ &  $(3.8-3.5)/3.8= 8 $\% \\ \hline
						$R_{r_1}=1.4 , R_{r_2}=2$&$0 + 1.4 + 2= 3.4 $& $0.3 + 1.1+ 1.7 = 3.1$ &  $(3.4-3.1)/3.4= 9 $\% \\ \hline
					\end{tabular}}
					\caption{Total transmit rate for non-scalable coding and proposed paradigm operating with negligible loss in distortion for low complexity approach.}
					\label{tab:re111}
				\end{table*}
				
				\textbf{Note:} Comparison between the joint design and low complexity design shows that, joint design has better ability to extract more common information, however low complexity approach is faster to extract the common information.  			
			
			\subsection{Results for the Joint Design of Vector Quantizers:}
			\label{sec:res}

			In our next experiment we use a multivariate normal distribution $x \sim N(\mu,\Sigma) $ with $\mu=(0,1)$ and $\Sigma=$
			$\begin{bmatrix}
			1 & 1\\ \vspace{-.1cm}
			1 & 2  \\  
			\end{bmatrix}$.  For non-scalable coding, we employ the generalized Lloyd algorithm \cite{lbg} to design optimal ECVQ. Note that while Gaussian sources are successively refinable asymptotically, this is not true at finite delays, and hence can benefit from our proposed common information paradigm.
			As shown in Table~\ref{tab:re2}, similar to the scalar case we have significant reduction in total transmit rate compared to non-scalable coding with negligible loss in performance. These results clearly demonstrate the universal effectiveness of the proposed approach to both scalar and vector quantizer design.
			
			We also observed in all the resulting quantizers that even though we did not impose any regularity on the codebook initialization (step 1 of the proposed algorithm), 
			the final quantizers turned to be regular, which is consistent with our lemma.
			\begin{table*}[t]
				\centering
				\scalebox{1}{
					\begin{tabular} {|c|c|c|c|}
						\hline
						&Non scalable total transmit rate & Proposed method total transmit rate &  Total transmit rate reduction \\ 
						&$R_{12}+R_1+R_2= R_t^{NS}$ & $R_{12}+R_1+R_2=R_t^P$ & $(R_t^{NS}-R_t^P)/R_t^{NS} $\\ \hline
						$R_{r_1}=2.3, R_{r_2}=3.5$&$0 + 2.3 + 3.5= 5.8 $& $0.5 + 1.8+ 3 = 5.3$ &  $(5.8-5.3)/5.8= 8$\% \\ \hline
						$R_{r_1}=1.5 , R_{r_2}=3.6$&$0 +1.5+ 3.6 = 5.1 $& $1.1 + 0.4+ 2.5= 4$ &  $(5.1-4)/5.1= 21 $\% \\ \hline
						$R_{r_1}=4.0 , R_{r_2}=5.3$&$0 + 4.0+ 5.3 = 9.3 $& $2.1 + 1.9+ 3.2= 7.2$ &  $(9.3-7.2)/9.3= 22 $\% \\ \hline
						$R_{r_1}=3.1, R_{r_2}=3.8$&$0 + 3.1 + 3.8 = 6.9 $& $1.6 + 1.5+ 2.2 = 5.3$ &  $(6.9-5.3)/6.9= 23$\% \\ \hline
						$R_{r_1}=3.3 , R_{r_2}=4.3$&$0 + 3.3+ 4.3 = 7.6 $& $2.1 + 1.2+ 2.2= 5.5$ &  $(7.6-5.5)/7.6= 27 $\% \\ \hline
						$R_{r_1}=2.6 , R_{r_2}=4.8$&$0 + 2.6+ 4.8 = 7.4 $& $2 + 0.6+ 2.8= 5.4$ &  $(7.4-5.4)/7.4= 27 $\% \\ \hline
						$R_{r_1}=3.7 , R_{r_2}=5.4$&$0 + 3.7+ 5.4 = 9.1 $& $2.5 + 1.2+ 2.9= 6.6$ &  $(9.1-6.6)/9.1= 27 $\% \\ \hline
					\end{tabular}}
					
					\caption{Total transmit rate for non-scalable coding and proposed paradigm operating with negligible loss in distortion for multivariate normal source.}
					\label{tab:re2}
					
				\end{table*}
				\subsection{Results for Multi-Layer Design}
				
				Finally for our last experiment we used the same multivariate Gaussian source for multi layer scenario of $L=3$. In this case encoder generates 5 packets at rates $R_1$, $R_2$, $R_3$, $R_{23}$, and $R_{123}$. Decoder $1$, $2$, and $3$ receive packets at rates $R_{r_1}=R_{1}+R_{123} $, $R_{r_2}=R_{2}+R_{23}+R_{123}$, and $R_{r_3}=R_{3}+R_{23}+R_{123}$  respectively. Total transmit rate for proposed common information paradigm is $R_t = R_1 + R_2 + R_3 + R_{23} + R_{123}$.
				
				As shown in Table~\ref{tab:re3}, similar to the two layer case we have significant reduction in total transmit rate compared to non-scalable coding with negligible loss in performance.
				
				\textbf{Note: } Comparison between two layered ECVQ and the multi-layer case, shows that on average we have more compression efficiency in multi-layer design, with roughly 50 \% reduction in total transmit rate compare to 30 \% reduction for the two layer case.  
				
				\begin{table*}[t]
					\centering
					\scalebox{.9}{
						\begin{tabular} {|c|c|c|c|}
							\hline
							&Non scalable total transmit rate & Proposed method total transmit rate &  Total transmit rate reduction \\ 
							&$R_1+R_2+R_3+R_{23}+R_{123}= R_t^{NS}$ & $R_1+R_2+R_3+R_{23}+R_{123}=R_t^P$ & $(R_t^{NS}-R_t^P)/R_t^{NS} $\\ \hline

						$R_{r_1}=1.2, R_{r_2}=1.5, R_{r_3}=2.0 $&$1.2 + 1.5 + 2.0 + 0 +0 = 4.7 $& $0.2 + 0.2 + 0.7 + 0.3 + 1 = 2.4$ &  $(4.7-2.4)/4.7= 49$\% \\ 
						\hline
						$R_{r_1}=1.2, R_{r_2}=2.0, R_{r_3}=2.3 $&$1.2 + 2.0 + 2.3 + 0 +0 = 5.5 $& $0.2 + 0.4 + 0.7 + 0.6 + 1 = 2.9$ &  $(5.5-2.9)/5.5= 47$\% \\ 
						\hline
						$R_{r_1}=0.8, R_{r_2}=2.6, R_{r_3}=3.3 $&$0.8 + 2.6 + 3.3 + 0 +0 = 6.7 $& $0.2 + 2.0 + 2.7 + 0.0 + 0.6 = 5.5$ &  $(6.7-5.5)/6.7= 22$\% \\ 
						\hline
							$R_{r_1}=2.2, R_{r_2}=2.7, R_{r_3}=2.85 $&$2.2 + 2.7 + 2.85 + 0 +0 = 7.75 $& $1 + 0.55 + 0.7 + 0.95 + 1.2 = 4.4$ &  $(7.75-4.4)/7.75= 43$\% \\ \hline
						$R_{r_1}=3.15, R_{r_2}=3.25, R_{r_3}=3.55 $&$3.15 + 3.25 + 3.55 + 0 +0 = 9.95 $& $0.85 + 0.35 + 0.65 + 0.6 + 2.3 =4.75$ &  $(9.95-4.75)/9.95= 52$\% \\ \hline
						$R_{r_1}=3.3, R_{r_2}=3.4, R_{r_3}=3.6 $&$3.3 + 3.4 + 3.6 + 0 +0 = 10.3 $& $1 + 0.7 + 0.9 + 0.4 + 2.3 = 5.3$ &  $(10.3-5.3)/10.3= 48$\% \\ \hline

						\end{tabular}}
						
						\caption{Total transmit rate for non-scalable coding and proposed paradigm  for 3 layer case, operating with negligible loss in distortion for multivariate normal source.}
						\label{tab:re3}
						
					\end{table*}

	\end{section}


%
\begin{section}{Conclusion}
	\label{sec:concl}
	
	A novel fundamental design technique for a common information based layered coding framework is proposed, wherein jointly designing common and individual layers' vector quantizers overcomes the limitations of conventional scalable coding and non-scalable coding, by providing the flexibility of transmitting common and individual bit-streams for different quality levels. The proposed iterative scalar and vector quantizer design technique optimizes all the quantizers jointly to minimize the overall cost at each iteration. This extracts the common information between different quality levels with negligible performance penalty, and also achieves better operating points in the tradeoff between total transmit rate and distortions at the decoders. Simulation results for the practically important Laplacian sources as well multivariate normal sources, confirm the usefulness of the proposed approach. 
\end{section}

\appendices

\section{}{\label{appendix:a}}
\label{apn:cpart}
To obtain the optimal partition boundaries for the common layer, we take derivative of: 
\[J=  a_1 D_1 + a_2 D_2 + \lambda_1(R_1+R_{12})+\lambda_2(R_2+R_{12})+\lambda_{12}R_{12},\]
with respect to $t_i^{12}$.
Note that, receive rate at decoder $l$ is:
\[R_l+R_{12} = - \sum_{i=1}^{M} \sum_{q=1}^{N_i^l} p_{q,i}^l \log_2 p_{q,i}^l , ~~~\text{for} ~~l=1,2\] and common rate is: 
\[R_{12} = -\sum_{i=1}^M p_i^{12}  \log_2  p_i^{12}.\]
Distortion at individual layer $l$ is:
\[D_l = \sum_{i=1}^{M} \sum_{q=1}^{N_i^l} \int_{t_{q-1,i}^l }^{t_{q,i}^l } (x-x_{q,i}^l )^2 f(x) dx, ~~~\text{for} ~~l=1,2. \]
Also note that $t_{N_i^l,i}^l=t_{i}^{12}$ and $t_{0,i+1}^l=t_{i}^{12}$.
Taking derivative  of $J$ respect to $t_i^{12}$:
\begin{dmath*}
\frac{\partial J}{\partial t_i^{12}}=  
a_1 \frac{\partial D_1}{\partial t_i^{12}} +
a_2 \frac{\partial D_2}{\partial t_i^{12}} +
\lambda_1 \frac{\partial (R_1+R_{12})}{\partial t_i^{12}} +
\lambda_2 \frac{\partial (R_2+R_{12})}{\partial t_i^{12}}+
\lambda_{12} \frac{\partial R_{12}}{\partial t_i^{12}} 
\end{dmath*}
For $l = 1, 2 :$
\begin{dmath*}
\frac{\partial D_l}{\partial t_i^{12}}  =  
\frac{\partial \int_{t_{N_i^l-1,i}^l }^{t_{N_i^l,i}^l } (x-x_{N_i^l,i}^l )^2 f(x) dx}{\partial t_i^{12}} +
\frac{\partial \int_{t_{0,i+1}^l }^{t_{1,i+1}^l } (x-x_{1,i+1}^l )^2 f(x) dx}{\partial t_i^{12}}\\
 = (t_i^{12}-x_{N_i^l,i}^l )^2 f(t_i^{12}) - (t_i^{12}-x_{1,i+1}^l )^2 f(t_i^{12}) \\
 =( ((x_{N_i^l,i}^l)^2-(x_{1,i+1}^l)^2) - 2~ t_i^{12} ({x_{N_i^l,i}^l}-{x_{1,i+1}^l})) f(t_i^{12})
\end{dmath*}
For $l = 1, 2 :$
\begin{dmath*}
-\frac{\partial (R_l+R_{12})}{\partial t_i^{12}}  =  \frac{\partial(p_{N_i^l,i}^l \log_2 p_{N_i^l,i}^l)}{\partial t_i^{12}} +
\frac{\partial(p_{1,i+1}^l \log_2 p_{1,i+1}^l)}{\partial t_i^{12}}\\  =
\frac{d (p_{N_i^l,i}^l \log_2 p_{N_i^l,i}^l)}{d p_{N_i^l,i}^l} * \frac{\partial p_{N_i^l,i}^l}{\partial t_i^{12}} + \frac{d (p_{1,i+1}^l \log_2 p_{1,i+1}^l)}{d p_{1,i+1}^l} * \frac{\partial p_{1,i+1}^l}{\partial t_i^{12}} \\
 = (\log_2 e +\log_2  p_{N_i^l,i}^l) * f(t_i^{12}) - (log_2  e +\log_2  p_{1,i+1}^l) * f(t_i^{12})\\
 = (\log_2  p_{N_i^l,i}^l  - \log_2  p_{1,i+1}^l)* f(t_i^{12})
\end{dmath*}
Similarly: 
\[
-\frac{\partial (R_{12})}{\partial t_i^{12}} = 
(\log_2  p_i^{12} - \log_2  p_{i+1}^{12} )* f(t_i^{12})
\]
By setting $\frac{\partial J}{\partial t_i^{12}}$ to zero and canceling $ f(t_i^{12})$ from all sides  we finally get:

\begin{dmath*}	
	t_i^{12} = \frac{(a_1 (x_{1,i+1}^1)^2+a_2 (x_{1,i+1}^2)^2)-(a_1 (x_{N_i^1,i}^1)^2+a_2 (x_{N_i^2,i}^2)^2)}{2((a_1 {x_{1,i+1}^1}+a_2 {x_{1,i+1}^2})-(a_1 {x_{N_i^1,i}^1}+a_2 {x_{N_i^2,i}^2}))}  - \hspace{-1.2pt}
	\frac{\lambda_1 (\log_2 p_{1,i+1}^1 - \log_2 p_{N_i^1,i}^1) + \lambda_2 (\log_2 p_{1,i+1}^2 - \log_2 p_{N_i^2,i}^2) }{{2((a_1 {x_{1,i+1}^1}+a_2 {x_{1,i+1}^2})-(a_1 {x_{N_i^1,i}^1}+a_2 {x_{N_i^2,i}^2})})}\\
	\hspace{-1.2pt}
	-\frac{ \lambda_{12} (\log_2 p_{i+1}^{12} - \log_2 p_{i}^{12})}{{2((a_1 {x_{1,i+1}^1}+a_2 {x_{1,i+1}^2})-(a_1 {x_{N_i^1,i}^1}+a_2 {x_{N_i^2,i}^2})})}
\end{dmath*}
\hfill  
$\blacksquare$

\section*{Acknowledgment}

The authors would like to thank  NSF funding under the code of  NSF-CCF-1016861 and NSF-CCF-1320599.

\ifCLASSOPTIONcaptionsoff
  \newpage
\fi



%

\bibliographystyle{IEEEtran}
\bibliography{dissertation}

%

\begin{IEEEbiography}[{\includegraphics[width=1in,height=1.25in,clip,keepaspectratio]{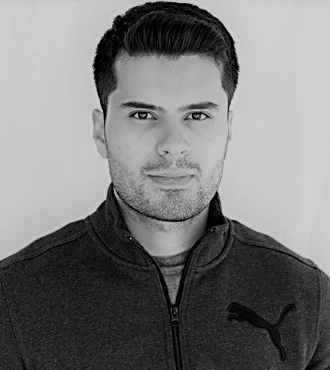}}]{Mehdi Salehifar}
	Mehdi Salehifar (S’17) received the B.Sc degree in electrical and computer engineering from University of Tehran,Tehran,Iran, in 2012 and the M.Sc. and Ph.D.  degrees in electrical and computer engineering from the University of California, Santa Barbara (UCSB), in 2014 and 2017 respectively. He is currently working in LG Electronics as a senior research engineer. His research interests include signal
	processing, general quantization theory, information theory, and video/audio coding. He worked in LG Electronics as a senior video researcher intern as well as at Qualcomm as an Audio researcher. He is a winner of several titles in national and international mathematic competitions.
\end{IEEEbiography}
\begin{IEEEbiography}[{\includegraphics[width=1in,height=1.25in,clip,keepaspectratio]{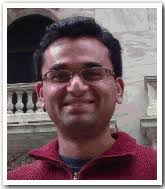}}]{Tejaswi Nanjundaswamy}
	Tejaswi Nanjundaswamy (S’11–M’14) received the B.E degree in electronics and communications engineering from the National Institute of Technology Karnataka, India, in 2004 and the M.S. and Ph.D. degrees in electrical and computer engineering from the University of California, Santa Barbara (UCSB), in 2009 and 2013, respectively. He is currently a post-doctoral researcher at Signal Compression Lab in UCSB, where he focuses on audio/video compression, processing and related technologies. He worked at Ittiam Systems, Bangalore, India from 2004 to 2008 as Senior Engineer on audio codecs and effects development. He also interned in the Multimedia Codecs division of Texas Instruments (TI), India in 2003.Dr. Nanjundaswamy is an associate member of the Audio Engineering Society (AES). He won the Student Technical Paper Award at the AES 129th Convention.
\end{IEEEbiography}
\begin{IEEEbiography}[{\includegraphics[width=1in,height=1.25in,clip,keepaspectratio]{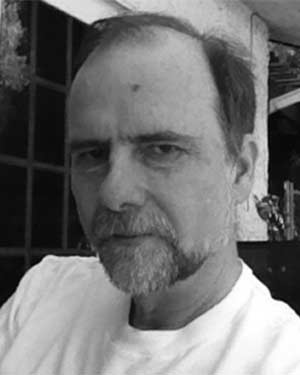}}]{Kenneth Rose}
	Kenneth Rose (S’85–M’91–SM’01–F’03) received the Ph.D. degree from the California Institute of Technology, Pasadena, in 1991.He then joined the Department of Electrical and Computer Engineering, University of California at Santa Barbara, where he is currently a Professor. His main research activities are in the areas of information theory and signal processing, and include rate-distortion theory, source and source-channel coding, audio-video coding and networking, pattern recognition, and non-convex optimization. He is interested in the relations between information theory, estimation theory, and statistical physics, and their potential impact on fundamental and practical problems in diverse disciplines.Prof. Rose was corecipient of the 1990 William R. Bennett Prize Paper Award of the IEEE Communications Society, as well as the 2004 and 2007 IEEE Signal Processing Society Best Paper Awards.
\end{IEEEbiography}




\end{document}